\title{\Huge$\,$\\[-2.75ex]
{Secret Key Generation via Pulse-Coupled Synchronization}\\[0.50ex]}
\author{\large%
Hessam Mahdavifar,\,\,\IEEEmembership{Member,~IEEE}, and
Najme Ebrahimi,\,\,\IEEEmembership{Member,~IEEE}\\

\thanks{H.\ Mahdavifar and N.\ Ebrahimi are with the Department of Electrical Engineering and Computer Science, University of Michigan, Ann Arbor, MI 48109 (email: hessam@umich.edu and najme@umich.edu).}
}
\newtheorem{theorem}{{Theorem}}
\newtheorem{lemma}[theorem]{{Lemma}}
\newtheorem{proposition}[theorem]{{Proposition}}
\newtheorem{corollary}[theorem]{{Corollary}}
\newcommand{\cS}{{\cal S}}
\DeclareMathAlphabet{\mathbfsl}{OT1}{ppl}{b}{it} 
\newcommand*{\rom}[1]{\expandafter\romannumeral #1}
\newcommand{\AlignFootnote}[1]{%
  \ifmeasuring@
  \else
    \iffirstchoice@
      \footnote{#1}%
    \fi
  \fi}
\newcommand{\be}[1]{\begin{equation}\label{#1}}
\newcommand{\ee}{\end{equation}} 
\newcommand{\eq}[1]{(\ref{#1})}
\renewcommand{\leq}{\leqslant}
\renewcommand{\geq}{\geqslant}
\newcommand{\script}[1]{{\mathscr #1}}
\newcommand{\Pref}[1]{Pro\-po\-si\-tion\,\ref{#1}}
\newcommand{\Lref}[1]{Lem\-ma\,\ref{#1}}
\newcommand{\Cref}[1]{Co\-ro\-lla\-ry\,\ref{#1}}
\newcommand{\deff}{\mbox{$\stackrel{\rm def}{=}$}}
\newcommand{\sX}{\script{X}}
\newcommand{\sY}{\script{Y}}
\newcommand{\sZ}{\script{Z}}
\newcommand{\shalf}{\mbox{\raisebox{.8mm}{\footnotesize $\scriptstyle 1$}
\footnotesize$\!\!\! / \!\!\!$ \raisebox{-.8mm}{\footnotesize
$\scriptstyle 2$}}}
\newcommand{\ep}{\epsilon}
\newcommand{\mt}{\tilde{m}}
\begin{document}

\maketitle

\begin{abstract}

A novel framework for sharing common randomness and generating secret keys in wireless networks is considered. In particular, a network of users equipped with pulse oscillators (POs) and coupling mechanisms in between is considered. Such mechanisms exist in synchronized biological and natural systems, and have been exploited to provide synchronization in distributed networks. We show that naturally-existing initial random phase differences between the POs in the network can be utilized to provide \textit{almost identical} common randomness to the users. This randomness is extracted from the synchronization time in the network. Bounds on the entropy of such randomness are derived for a two-user system and a conjecture is made for a general $n$-user system. Then, a three-terminal scenario is considered including two legitimate users and a passive eavesdropper, referred to as Eve. Since in a practical setting Eve receives pulses with propagation delays, she can not identify the exact synchronization time. A simplified model is then considered for Eve's receiver and then a bound on the rate of secret key generation is derived. Also, it is shown, under certain conditions, that the proposed protocol is resilient to an active jammer equipped with a similar pulse generation mechanism.

\end{abstract}


\section{Introduction} 
\label{sec:Introduction}

Physical layer security methods provide an alternative to conventional encryption schemes in order to ensure security in wireless networks \cite{bloch2011physical}. Alternatively, they can be deployed to exchange secret keys between the nodes in order to complement the higher layer encryption schemes. The fundamental works of \cite{ahlswede1993common, maurer1993secret} established the use of common randomness for secret key generation. An important question is then how to generate common randomness at the nodes in order to utilize such protocols in wireless networks. To this end, properties of wireless links, such as channel gain and delay are shown to provide a great source for the common randomness, which have recently received significant attention \cite{zhang2016key}. 

There are several challenges, however, to standardize channel-based secret key generation protocols. A common assumption in such protocols is the channel reciprocity between the legitimate parties \cite{cgc,gollakota2011physical}. This would require a perfect synchronization to avoid phase and frequency mismatch between the wireless nodes which is often hard to ensure in distributed networks \cite{simeone2008distributed}. Furthermore, if the nodes are static, then with no channel variations the amount of secret key bits that can be generated will be limited. To resolve this, induced randomness can be introduced in wireless nodes to increase the rate of secret key generation \cite{glob2}. However, in general, such channel-based secret key generation protocols require an extra level of key reconciliation over the public channel to ensure the generated keys match at both ends. This would require an entire standardization of channel coding and modulation for this purpose which would make a barrier in deploying such methods in practice. 

We recently proposed a novel approach to implementation of physical layer security by exploiting coupling dynamics in the network \cite{glob1}. Such coupling dynamics are already being used for synchronization in wireless networks. In particular, we suggested to use coupled oscillators to implement the proposed approach in radio-frequency (RF) front end \cite{glob1}. It is well-known that a network of RF coupled oscillators converges within nanoseconds to a steady-state condition provided that initial free-running frequencies are within a certain \textit{locking range} \cite{ebrahimi2016robustness, ebrahimi201771}. However, such coupling dynamics, such as electromagnetic coupling, are often limited to short distance ranges for this specific application.  

In this paper, we propose to exploit synchronization mechanisms based on pulse-coupled oscillators in order to securely generate random keys in distributed networks. The proposed methods do not require extra processing, e.g., the shared randomness is almost identical at the nodes, and extra hardware to generate randomness. They also do not require channel randomness and consequently, assumptions on channel reciprocity. The naturally-existing random phase differences between the wireless nodes, prior to synchronization, would serve as the source of common randomness. It is shown, under a simplified model for the eavesdropper, that a positive-rate secret key can always be guaranteed. Furthermore, the resilience of the proposed protocol to certain jamming attacks is discussed.

The rest of this paper is organized as follows. In Section\,\ref{sec2} some background on pulse-coupled oscillators and secret key generation is provided. In Section\,\ref{sec3} the system model is formulated. In Section\,\ref{sec4} bounds on the entropy of shared common randomness are derived. In Section\,\ref{sec5} secret key generation rate in the presence of Eve is characterized. Resilience of the proposed protocols to jamming attacks is discussed in Section\,\ref{sec6}. Finally, the paper is concluded in Section\,\ref{sec7}.

\section{Preliminaries} 
\label{sec2}

\subsection{Pulse coupled oscillators}
\label{subsec21}

A pulse oscillator (PO) is characterized using a state variable $x$ which increases monotonically toward a normalized threshold of $x=1$. In the model considered in \cite{str}, $x$ evolves as $x = f(\phi)$, where $\phi$ is the normalized time that increases from $0$ to $1$. Also, $f(0) = 0$ and $f(1) = 1$. The PO transmits a pulse, which can be ideally considered as a delta function with width $0$, once its state $x$ reaches $1$. Then $\phi$ is reset to $0$. The function $f$ is assumed to be concave down and strictly increasing, i.e., $f' > 0$ and $f'' < 0$ over $[0,1]$. Also, we assume that $f$ is semi-diffrentiable at $0$ and $1$, and hence $f'$ is bounded over $[0,1]$. An example of $x$ in an electrical system is the charge of a capacitor in a resistor-capacitor (RC) circuit as a function of time. This matches with the Peskin model \cite{peskin1975mathematical}, which considers $f(\phi) = c(1-e^{-\gamma \phi})$, where $c,\gamma$ are constants. 

Networks of coupled POs naturally exist in synchronized biological and natural systems \cite{peskin1975mathematical,str}, and have been exploited to provide synchronization in distributed wireless networks \cite{simeone2008distributed}. Such a network is modeled as follows. Suppose that each node in the network is equipped with an identical PO. Let $\epsilon \in (0,1)$ be a fixed parameter. When a node $V$ receives a pulse from one of its neighbors, denoted by $U$, in the network, its dynamic changes as follows. If the current state of $V$, $x_V$, is at least $1-\epsilon$, then it is changed to $x_V = 1$ and a pulse is transmitted by $V$ right away. This implies that $U$ and $V$ are synchronized. Otherwise, when $x_V<1-\epsilon$, the state of $V$ is changed to $x_V+\ep$, i.e., its phase $\phi$ is changed to $\phi' = f^{-1}(f(\phi)+\epsilon)$. This can be thought as applying an extra charge of $\epsilon$ to $V$'s capacitor upon arrival of an external pulse. For simplicity, suppose that the network is fully connected, in which case when two nodes become synchronized, they stay synchronized moving forward. It is proved in \cite{str} that, for any $n$, network synchronization occurs in a fully connected network of $n$ identical POs, i.e., all the nodes synchronize to each other, except for a measure-zero set of initial phases $(\phi_1,\phi_2,\dots,\phi_{n})$, where $\phi_i$ is the initial phase of the $i$-th PO. 

\subsection{Secret key generation}
\label{subsec22}

Secret key generation protocols aim at securely establishing random keys between legitimate parties using common randomness. In this paper, we mostly focus on a case involving two legitimate parties Alice and Bob together with a passive Eve. In particular, a three-terminal source-type model is considered. Such model, in general, can be described as follows. Let $X \in \sX$, $Y \in \sY$, and $Z \in \sZ$ denote Alice's, Bob's, and Eve's observations, respectively, where $\sX$, $\sY$, and $\sZ$ are the corresponding alphabets. Following the convention, let capital letters denote the random variables and small letters denote their instances. In the considered source-type model, $X,Y,Z$ are distributed according to a joint probability distribution $p_{X,Y,Z}$. The goal for Alice and Bob is to agree on a shared secret key $K$, based on their observations $X$ and $Y$ using an arbitrary number of communication rounds over a public channel with unlimited capacity. Such process is tightly related to Slepian-Wolf compression. The connection is useful for designing the so-called \textit{key reconciliation} stage of secret key generation protocols using cosets of practical error-correcting codes. The security of $K$ is measured in an information-theoretic sense given Eve's observation $Z$ and all the public interactions between Alice and Bob. The two-user secret key capacity, denoted by $\cS(X;Y|Z)$, is bounded as follows \cite[Theorem 2 and 3]{maurer1993secret}:
\be{key-bound}
\begin{split}
&\max \{I(X; Y ) - I(X;Z), I(Y ; X) - I(Y ;Z)\} \leq S (X; Y|Z)\\
 &\leq  \min \{I(X; Y ), I(X; Y |Z)\}. 
 \end{split}
\ee
Such results were later extended to multiple-terminal scenarios \cite{csiszar2004secrecy}. Furthermore, an exact characterization was derived for a case in which only one round of communication occurs from Alice to Bob \cite[Theorem 2]{ahlswede1993common}.

\section{System Model} 
\label{sec3}

Consider a fully connected network where the network nodes are equipped with identical POs. Suppose that each PO starts with a random phase that is uniformly distributed over $[0,1]$. Then the POs enter a dynamic system as described in Section\,\ref{subsec21}. Each node counts the number of pulses its PO transmits till network synchronization occurs. The network synchronization can be identified by individual nodes once no external pulses are received between two consecutive pulses. Each node saves this number as the common randomness. The following lemma shows that all the nodes observe the same number, up to a difference of $1$. Let $m_i$ denote the number of pulses that the node $V_i$, for $i=1,2,\dots,n$, has transmitted so far at a given time. 

\begin{lemma}
For any $i,j$, we have $|m_i - m_j| \leq 1$. 
\end{lemma}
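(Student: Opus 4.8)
The plan is to summarize each oscillator by a single \emph{unwrapped phase} and to control its spread. For node $V_i$ at time $t$, let $\phi_i(t)\in[0,1)$ be its current phase and set $\Phi_i(t)=m_i(t)+\phi_i(t)$, so that $m_i(t)=\lfloor\Phi_i(t)\rfloor$. The whole lemma then reduces to the single invariant
\[
\Delta(t):=\max_i\Phi_i(t)-\min_i\Phi_i(t)\le 1\quad\text{for all }t,
\]
because $|\Phi_i-\Phi_j|\le 1$ forces $|\lfloor\Phi_i\rfloor-\lfloor\Phi_j\rfloor|\le 1$, which is exactly $|m_i-m_j|\le 1$. So the goal is to show that the unwrapped phases never spread by more than one full cycle.

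I would prove this invariant by induction over the discrete sequence of firing events, handling the continuous motion between events separately. Initially all $m_i=0$ and $\phi_i\in[0,1]$, so $\Delta\le 1$. Between two consecutive events no pulses are exchanged and, since the POs are identical, every $\phi_i$ advances at a common rate; hence all $\Phi_i$ increase by the same amount and $\Delta$ is unchanged. The only nontrivial point is therefore that a firing event preserves $\Delta\le 1$. A useful reformulation of the hypothesis $\Delta\le 1$ is structural: the counts $m_i$ take at most two consecutive values $N-1$ and $N$, and the phase is \emph{decreasing} in the count, i.e.\ every node of count $N$ has phase at most that of every node of count $N-1$ (a node that has fired once more is correspondingly behind in phase).

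For the event step, the next node to fire is the one of largest phase; it resets to phase $0$ and emits a pulse, upon which every other node with state at least $1-\epsilon$ is likewise reset to phase $0$ (it fires), while each remaining node has its phase advanced by the order-preserving map $\phi\mapsto f^{-1}\!\big(f(\phi)+\epsilon\big)$, which keeps the phase below $1$. The plan is to feed the pre-event ordering (higher count $\Rightarrow$ smaller phase) through this monotone map and verify that the resulting $\Phi_i$ still all lie in a window of length $1$; the monotonicity of $f^{-1}(f(\cdot)+\epsilon)$ is exactly what matches the advance of a lagging node to that of the node ahead of it. The main obstacle is the \emph{cascade}: one firing may pull in a whole cluster of nodes, each of which emits its own pulse, so within a single event a lagging node can be advanced several times, and might itself be driven across threshold. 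I would need to show that this repeated advancement is independent of the order in which the simultaneous pulses are processed, that it terminates, and that it can never carry one oscillator a full cycle ahead of (or behind) another. Establishing this order-independent bookkeeping of the cascade is the technical crux; once it is in place the inductive step, and hence the lemma, follows.
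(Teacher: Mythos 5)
Your reduction of the lemma to the unwrapped-phase invariant $\Delta(t)\le 1$ is sound, and your observation that the coupling map $\phi\mapsto f^{-1}\!\big(f(\phi)+\epsilon\big)$ is order-preserving is exactly the right ingredient. But as written the argument has a genuine gap, and you name it yourself: the entire inductive step --- showing that a firing event, \emph{including its absorption cascade}, preserves the invariant --- is deferred (``I would need to show\dots the technical crux''). Since that step is the whole content of the lemma, what you have is a plan, not a proof; the order-independence, termination, and non-overtaking properties of cascades that you list are precisely what remains to be established.

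The way to close the gap (and the route the paper takes) is to argue \emph{pairwise} rather than through a global spread invariant, which makes the cascade bookkeeping evaporate. Fix $V_i$ and $V_j$ and suppose $V_i$ has just fired, so $\phi_i=0<\phi_j$. Everything that happens afterwards acts on the pair symmetrically: free running advances both phases at the same rate, and an external pulse either advances both through the strictly increasing map $f^{-1}(f(\cdot)+\epsilon)$, preserving $\phi_i<\phi_j$, or drives at least one of them across threshold --- but since $x_j=f(\phi_j)>f(\phi_i)=x_i$, any pulse that absorbs $V_i$ necessarily absorbs $V_j$ at the same instant (in which case they are synchronized and fire together from then on). So no cascade, in whatever order its simultaneous pulses are processed, can ever put $V_i$ ahead of $V_j$: until synchronization, $V_j$ must fire before $V_i$ fires again. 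Firings therefore strictly interleave, which gives $|m_i-m_j|\le 1$ directly, with no need for order-independence arguments or a termination analysis of cascades. Your own monotonicity observation, applied pairwise in this way, is the missing idea.
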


\begin{proof}
The proof is by noting that before $V_i$ and $V_j$ become synchronized, it is not possible that $V_i$ transmits two consecutive pulses without $V_j$ transmitting any pulse in between. In fact once $V_i$ sends a pulse, we have $\phi_i = 0$, while $\phi_j > 0$. Now, since  $f^{-1}(f(\phi)+\epsilon)$ is a strictly increasing function, we have $\phi_j > \phi_i$ which holds when pulses external to $V_i$ and $V_j$ arrive as well. This holds till $\phi_j = 1$, in which case $V_j$ sends a pulse before $V_i$ sends the next one. 
\end{proof}
The lemma implies that the number of pulses that each PO counts till network synchronization occurs can serve as a source of \emph{almost} noise-free common randomness.

Let us refer to a time-interval during which each of the POs send one pulse as a \textit{full cycle}. Then time is split into non-overlapping and consecutive \textit{full cycles}. Note that since the time between two consecutive pulses by each of the POs keep changing, we can not define a \textit{full cycle} as a fixed time interval. Due to delays in propagation of pulses, an external user/eavesdropper can not exactly identify when synchronization occurs. Under propagation delays, and assuming the delays between POs is less than half the time unit, the synchronization still occurs for $n=2$ \cite{ernst1995synchronization}, and under certain conditions for general $n$ \cite{ernst1995synchronization} and also for locally connected networks \cite{ferrari2017convergence}. To this end, a certain \textit{refractory period} $\rho$ is defined and a PO does not update its state for $\rho$ seconds right after it sends a pulse.  

Motivated by practical considerations of propagation delays we describe Eve's observation as follows. During each \textit{full cycle}, Eve receives $n$ delayed pulses from the $n$ POs. Then she can process the timings between received pulses, compare them with her estimates of propagation delays with each of the nodes, and also compare the timings with previous \textit{full cycles}. Taking all these information into account to model Eve's receiver is not an easy task. Instead, we consider s simplified binary symmetric channel-type model for Eve as follows. Let a binary random variable $S$ indicates whether all nodes are synchronized in the current \textit{full cycle} or not, e.g., if synchronization occurs/has occurred, then $S=1$ and otherwise, $S=0$. Then Eve observes $Z=S$ with probability $1-p$, and $Z=1-S$ with probability $p$, for some $p \in (0,\shalf)$. Also, we consider a memoryless model, in which Eve's observation noise $Z \oplus S$ is independent across different \textit{full cycles}.

Most of prior work on physical layer security involves a \textit{passive} eavesdropper. In our proposed framework, an active eavesdropper may try to act as a legitimate node of the network by deploying similar pulse-coupling mechanisms in order to detect the synchronization time which will be used for key generation. However, such a malicious act can be detected by other nodes assuming they know the total number of nodes in the network. In this paper, we do not formulate such active eavesdropping methods and leave it for future work. There may exist, however, another type of adversary interested in \textit{jamming} the proposed protocol by randomly/selectively sending pulses into the network in order to prevent synchronization. We model this scenario by assuming that the jammer has the same pulse generation mechanism as other nodes in the network, i.e., it can send at most one pulse during each \textit{full cycle}. Also, suppose that legitimate nodes can not distinguish between pulses sent by other legitimate nodes and the jammer. We will show that synchronization may not occur and provide an upper bound on the probability of such event in Section\,\ref{sec6}.

\section{Entropy of Shared Randomness via Pulse-Coupled Synchronization}
\label{sec4}

Let $M$ denote the random variable that represents the total number of pulses before synchronization. In order to simplify the formulation, we take the maximum of the counted pulses by POs as the shared randomness $M$, knowing that each of the POs has counted either $M$ or $M-1$ pulses. At the end of this section, we discuss how such inconsistency can be resolved. 

Let $\left\{p_i\right\}_{i=1}^{\infty}$ denote the probability distribution of $M$, where $p_i = Pr\left\{M = i\right\}$. The goal is to upper bound and lower bound $p_i$'s in order to provide bounds on the entropy of $M$. 

We consider only two nodes. For two oscillators, we show that, roughly speaking, the probability distribution of the number of pulses before synchronization occurs behave like a discretized exponential distribution. 

Similar to \cite{str}, let 
\be{h-def}
h(\tau) \,\deff\, f^{-1}(\ep+f(1-\tau)),\ R(\tau) \,\deff\, h(h(\tau)).
\ee
Let $\delta = 1-f^{-1}(1-\ep)$. Then the domain of $h$ is $(\delta,1)$ and the domain of $R$ is $(\delta,h^{-1}(\delta))$. 

Let $\phi, \phi + \tau$ denote the initial phases of the two POs, where $0<\phi<\phi+\tau<1$. If $\tau \leq \delta$, i.e., $f(1-\tau)\geq 1-\ep$, then the two POs synchronize after the next pulse. Otherwise, the phase difference, after the first pulse is sent, become $h(\tau)$, where $h(.)$ is defined in \eq{h-def}. Hence, $R(\tau)$ is the phase difference after the \textit{full cycle}. Then it is shown in \cite[Proposition 2.2.]{str} that $R$ has a fixed point $\tau^*$ that is a repeller, i.e., for $\tau < \tau^*$, $R(\tau) < \tau$, and for  $\tau > \tau^*$, $R(\tau) > \tau$. Furthermore, it is shown in \cite[Lemma 2.1]{str} that $h' < -1$ and $R' > 1$ over their domains. Since $f'$ is bounded over $[0,1]$, it can be shown that, $\sup |h'| < \infty$, $\inf R' > 1$, and $\sup R' < \infty$ over their domains. Then let 
\be{lamb-def}
\lambda_0 \,\deff\, 1/|\sup h'|,\ \lambda_1 \,\deff\, 1/\sup R',\ \lambda_2 \,\deff\, 1/\inf R',
\ee 
where $0<\lambda_1<\lambda_2<1$, and $\lambda_0 > 1$ by \cite[Lemma 2.1]{str}. Let also $\tau^*$ denote the fixed point of $R$. 

\begin{lemma}
\label{lemfxp}
There exists an increasing sequence of $\left\{\tau_i\right\}_{i=1}^{\infty}$ and a decreasing sequence of $\left\{\tau'_i\right\}_{i=1}^{\infty}$ such that
\begin{itemize}
\item (i) $\lim_{i \rightarrow \infty} \tau_i = \lim_{i \rightarrow \infty} \tau'_i = \tau^*$.
\item (ii) For initial phase difference $\tau \in [\tau_i,\tau_{i+1}] \cup [\tau'_{i+1}, \tau'_i]$, we have $M = i$. 
\end{itemize}
\end{lemma}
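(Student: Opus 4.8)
The plan is to reduce the statement to an \emph{exit-time} computation for the one-dimensional map $R$ and then build the two sequences by backward iteration. First I would spell out the pulse-by-pulse dynamics: once the state has been reset, each transmitted pulse replaces the current phase gap $\tau$ by $h(\tau)$, and synchronization is triggered exactly when the gap first drops to $\tau\le\delta$ (the condition $f(1-\tau)\ge 1-\ep$). Since $h$ is decreasing with a unique fixed point $\tau^*$, it swaps the two sides of $\tau^*$, while $R=h\circ h$ from \eq{h-def} preserves the side and pushes points away from $\tau^*$. Hence, for an initial gap $\tau<\tau^*$ the $R$-orbit decreases and synchronization happens at the first full cycle whose gap exits below $\delta$; for $\tau>\tau^*$ the $R$-orbit increases and exits above $h^{-1}(\delta)$, with the synchronizing pulse being the second rather than the first pulse of that cycle. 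This parity difference is precisely what produces two separate sequences, and it makes $M$ equal to the first exit time of the $R$-orbit on the relevant side up to a fixed additive constant.

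Next I would construct the sequences by backward iteration of $R$. On the lower side, $R$ restricts to an increasing bijection of $(\delta,\tau^*)$ onto $(f^{-1}(\ep),\tau^*)$ that fixes $\tau^*$, so I can set $\tau_1=\delta$ and $\tau_{i+1}=R^{-1}(\tau_i)$; symmetrically, on the upper side I set $\tau'_1=h^{-1}(\delta)$ and $\tau'_{i+1}=R^{-1}(\tau'_i)$. Monotonicity of $R$ then yields the recursion $\{M=i\}=R^{-1}(\{M=i-1\})$ on each side, so that $[\tau_i,\tau_{i+1}]$ and $[\tau'_{i+1},\tau'_i]$ are exactly the level sets $\{M=i\}$, which gives item (ii).

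For item (i), I would use the contraction estimate for the backward map. By \eq{lamb-def} we have $(R^{-1})'=1/R'\in[\lambda_1,\lambda_2]$ with $\lambda_2<1$ since $\inf R'>1$, so $|\tau_{i+1}-\tau^*|\le\lambda_2\,|\tau_i-\tau^*|$ and likewise $|\tau'_{i+1}-\tau^*|\le\lambda_2\,|\tau'_i-\tau^*|$. Thus $\{\tau_i\}$ increases to $\tau^*$ and $\{\tau'_i\}$ decreases to $\tau^*$ geometrically; the matching lower rate $\lambda_1$ is what will later yield the exponential-type two-sided bounds on the $p_i$.

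The main obstacle is the synchronization bookkeeping rather than the dynamics. I expect the careful points to be: (a) confirming that synchronization is indeed triggered by the first pulse of a cycle on the lower side but the second pulse on the upper side, which introduces an off-by-one in $M$ between the two sequences and must be reconciled with the fact that each individual PO counts either $M$ or $M-1$ pulses; (b) verifying that the orbit stays inside the domain $(\delta,h^{-1}(\delta))$ of $R$ until the very step at which it synchronizes, so the recursion $\{M=i\}=R^{-1}(\{M=i-1\})$ is legitimate; and (c) handling the endpoint/closure conventions and excluding the measure-zero set consisting of $\tau^*$ together with its backward $R$-orbit, where synchronization never occurs. Once these are settled, the expanding-map skeleton above delivers both claims.
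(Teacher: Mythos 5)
Your proposal follows essentially the same route as the paper's own proof: the identical backward-iteration construction $\tau_1=\delta$, $\tau'_1=h^{-1}(\delta)$, $\tau_{i+1}=R^{-1}(\tau_i)$, $\tau'_{i+1}=R^{-1}(\tau'_i)$, convergence in (i) from the fact that $R^{-1}$ is a contraction toward the fixed point $\tau^*$, and (ii) by propagating the base cases $[0,\delta]$ and $[h^{-1}(\delta),1]$ through the full-cycle map $R$ (your ``level-set recursion'' $\{M=i\}=R^{-1}(\{M=i-1\})$ is exactly the paper's induction step). Your version is somewhat more quantitative (explicit geometric rate $\lambda_2$) and more careful about the pulse-parity and domain bookkeeping, but it is the same argument, and it even inherits the paper's own indexing convention relating $M$ to the intervals.
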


\begin{proof}
Let $\tau_0 = 0$, $\tau'_0 = 1$, $\tau_1 = \delta$, $\tau'_1 = h^{-1}(\delta)$, where $\delta = 1-f^{-1}(1-\ep)$. Then for $i\geq 1$, let $\tau_{i+1} = R^{-1}(\tau_i)$ and $\tau'_{i+1} = R^{-1}(\tau'_i)$. To prove the first condition, note that $\tau^*$ is also a fixed point for $R^{-1}$. Also, since $R$ is a repeller, $R^{-1}$ is a contraction mapping. Hence, $(i)$ follows. The proof of the second part is by induction on $i$. Note that if the initial phase diffrence $\tau$ is in $[0,\delta]$, then synchronization occurs after the first pulse is sent. Hence, $M=1$. If $\tau \in [h^{-1}(\delta),1]$, then after the first pulse the phase difference becomes $h(\tau) \in [0,\delta]$. Hence, synchronization occurs after each of the POs send one pulse and again, $M=1$. Now, suppose that $\tau \in [\tau_i,\tau_{i+1}] \cup [\tau'_{i+1}, \tau'_i]$, where $i>0$. After a \textit{full cycle} the phase difference becomes $R(\tau)$ which belongs to $[\tau_{i-1},\tau_{i}] \cup [\tau'_{i}, \tau'_{i-1}]$, and the proof follows by induction hypothesis.
\end{proof}

Let
\be{ab-def} 
a_i = \tau_i - \tau_{i-1},\ b_i = \tau'_{i-1} - \tau'_i,
\ee
for any $i \geq 1$, where $\left\{\tau_i\right\}$ and $\left\{\tau'_i\right\}$ are as introduced in the proof of \Lref{lemfxp}. Then we have the following corollary.

\begin{corollary}
\label{cor1}
Assuming that the initial phase difference is uniform we have $p_i= a_i+b_i$, where $p_i = Pr\left\{M = i\right\} $. 
\end{corollary}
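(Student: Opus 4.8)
The plan is to read the event $\{M=i\}$ directly off part (ii) of \Lref{lemfxp} and then convert a statement about interval lengths into one about probabilities using the uniformity hypothesis. First I would note that, by \Lref{lemfxp}(ii), the set of initial phase differences $\tau$ for which exactly $i$ full cycles elapse before synchronization is a union of two intervals, one lying below the repelling fixed point $\tau^*$ and one lying above it. Reading off the endpoints from the construction in the proof of \Lref{lemfxp} (base case $M=1$ on $[\tau_0,\tau_1]\cup[\tau'_1,\tau'_0]$, then peeling off one full cycle via $R^{-1}$), these are the intervals $[\tau_{i-1},\tau_i]$ and $[\tau'_i,\tau'_{i-1}]$, whose lengths are exactly $a_i=\tau_i-\tau_{i-1}$ and $b_i=\tau'_{i-1}-\tau'_i$ as defined in \eq{ab-def}.

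Second, I would verify that these two intervals are disjoint. By \Lref{lemfxp}(i) the sequence $\{\tau_i\}$ is increasing with limit $\tau^*$ and $\{\tau'_i\}$ is decreasing with limit $\tau^*$, so $[\tau_{i-1},\tau_i]\subseteq[0,\tau^*)$ while $[\tau'_i,\tau'_{i-1}]\subseteq(\tau^*,1]$; hence the two pieces cannot overlap and $Pr\{M=i\}$ is the sum of the probabilities that $\tau$ lands in each.

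Third, invoking the assumption that the initial phase difference $\tau$ is uniform over $[0,1]$, the probability assigned to any subinterval equals its length. Thus $Pr\{\tau\in[\tau_{i-1},\tau_i]\}=a_i$ and $Pr\{\tau\in[\tau'_i,\tau'_{i-1}]\}=b_i$, which together give $p_i=a_i+b_i$.

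There is no genuinely hard step here; the only points requiring care are bookkeeping ones, and I expect the mild off-by-one in the displayed statement of \Lref{lemfxp} to be the thing most worth stating cleanly rather than an actual obstacle. Adjacent intervals share a single endpoint (for instance $\tau_i$ lies in both $[\tau_{i-1},\tau_i]$ and $[\tau_i,\tau_{i+1}]$) and the fixed point $\tau^*$ is never assigned to any interval, but since $\tau$ is a continuous random variable all of these are measure-zero events and do not affect the probabilities; in particular it is immaterial whether the intervals are taken open or closed. As a final consistency check I would confirm $\sum_{i\ge 1}p_i=\sum_{i\ge 1}(a_i+b_i)=\tau^*+(1-\tau^*)=1$, where the two sums telescope to the common limit $\tau^*$, verifying that $\{p_i\}$ is a bona fide probability distribution.
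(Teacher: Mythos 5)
Your proof is correct and coincides with what the paper intends: the paper states \Cref{cor1} with no proof at all, treating it as immediate from \Lref{lemfxp} plus the uniformity assumption, and your interval-length bookkeeping (including the correct handling of the off-by-one between the displayed statement of \Lref{lemfxp}(ii) and its own proof, under which $\{M=i\}$ corresponds to $[\tau_{i-1},\tau_i]\cup[\tau'_i,\tau'_{i-1}]$) is exactly the implicit argument. Nothing further is needed.
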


\begin{lemma}
\label{p-bound}
For any $i \geq 2$, $\lambda_1 p_{i-1} \leq p_i \leq \lambda_2 p_{i-1}$, where $\lambda_1,\lambda_2$ are defined in \eq{lamb-def}.
\end{lemma}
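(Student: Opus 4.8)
The plan is to prove the two-sided bound separately for the two families of gap lengths $a_i$ and $b_i$ and then combine them, since Corollary~\ref{cor1} gives $p_i = a_i + b_i$. Indeed, if I can show $\lambda_1 a_{i-1}\le a_i\le \lambda_2 a_{i-1}$ and $\lambda_1 b_{i-1}\le b_i\le \lambda_2 b_{i-1}$, then adding the two chains of inequalities termwise immediately yields $\lambda_1 p_{i-1}\le p_i\le \lambda_2 p_{i-1}$. So the whole problem reduces to controlling how consecutive gaps of each sequence scale.

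The core idea is a mean value theorem argument driven by the recursion $\tau_{i+1}=R^{-1}(\tau_i)$, equivalently $R(\tau_i)=\tau_{i-1}$. Writing $a_{i-1}=\tau_{i-1}-\tau_{i-2}=R(\tau_i)-R(\tau_{i-1})$ and applying the mean value theorem, there is a point $\xi\in(\tau_{i-1},\tau_i)$ in the domain of $R$ with $a_{i-1}=R'(\xi)\,a_i$, hence $a_i=a_{i-1}/R'(\xi)$. Because $\inf R'\le R'(\xi)\le \sup R'$ and, by \eqref{lamb-def}, $\lambda_1=1/\sup R'$ and $\lambda_2=1/\inf R'$, inverting gives $\lambda_1 a_{i-1}\le a_i\le \lambda_2 a_{i-1}$. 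The identical computation using $\tau'_{i-1}=R(\tau'_i)$ yields $\lambda_1 b_{i-1}\le b_i\le \lambda_2 b_{i-1}$. The facts that $\sup R'<\infty$ and $\inf R'>1$ on the relevant domains, established just before \eqref{lamb-def}, are exactly what guarantee that $\xi$ stays where these bounds are valid and that $R$ is differentiable there, so the mean value theorem applies and the resulting multipliers lie in $[\lambda_1,\lambda_2]$.

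The step I expect to be delicate is the endpoints, i.e.\ the base case $i=2$, because the identity $R(\tau_j)=\tau_{j-1}$ coming from the recursion is only guaranteed for $j\ge 2$, so applying it simultaneously to $\tau_i$ and $\tau_{i-1}$ is automatic only for $i\ge 3$. For the $b$-sequence the base case is nonetheless fine, since one checks directly that $R(\tau'_1)=h\bigl(h(h^{-1}(\delta))\bigr)=h(\delta)=1=\tau'_0$, so $R(\tau'_j)=\tau'_{j-1}$ already holds at $j=1$ and the mean value argument covers $i=2$. For the $a$-sequence the analogous identity fails: $R(\tau_1)=R(\delta)=f^{-1}(\epsilon)\ne 0=\tau_0$, so the left endpoint $\tau_0$ does not lie on the $R$-orbit of $\{\tau_i\}$. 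I would therefore treat the left family at $i=2$ separately, estimating $a_2=\tau_2-\tau_1$ against $a_1=\delta$ directly from the definitions of $\tau_1,\tau_2$ together with the bounds on $R'$; the upper bound $a_2\le\lambda_2 a_1$ drops out immediately, while the lower bound is the subtle direction and is the one place a short extra computation is genuinely needed. All remaining cases $i\ge 3$ then follow uniformly from the mean value theorem exactly as above.
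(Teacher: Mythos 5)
Your decomposition and your mean value theorem argument are exactly the paper's proof: the paper also sets $\tau=\tau_i$, writes $a_i=\tau-R(\tau)$ and $a_{i-1}=R(\tau)-R(R(\tau))$, invokes the MVT, and then says ``the same argument can be applied to $b_i$'s'' before summing via Corollary~\ref{cor1}. For $i\ge 3$, and for the $b$-family at every $i\ge 2$, this is correct, and your verification that $R(\tau'_1)=h(h(h^{-1}(\delta)))=h(\delta)=1=\tau'_0$ is exactly why the $b$-side base case works. You have also caught something real that the paper glosses over: at $i=2$ the paper's identity $a_{1}=R(\tau_2)-R(R(\tau_2))=\tau_1-R(\tau_1)$ silently presumes $R(\tau_1)=\tau_0=0$, whereas in fact $R(\tau_1)=R(\delta)=h(1)=f^{-1}(\epsilon)>0$, while $a_1=\delta$. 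So the paper's own proof has precisely the hole you isolated.

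The genuine gap is in your final step: the ``short extra computation'' you promise for $\lambda_1 a_1\le a_2$ does not exist, because that inequality (and even the combined claim $\lambda_1 p_1\le p_2$) is false in general. The MVT only yields $\lambda_1(\delta-f^{-1}(\epsilon))\le a_2\le\lambda_2(\delta-f^{-1}(\epsilon))$, and the deficit $\lambda_1 f^{-1}(\epsilon)$ is unrecoverable for a structural reason: the range of $R$ is only $(f^{-1}(\epsilon),1)$, so the sub-interval $[0,f^{-1}(\epsilon)]$ of the set $\{M=1\}$ has no $R$-preimage and contributes nothing to $p_2$. To see this is fatal, take the Peskin model $f(\phi)=(1-e^{-\gamma\phi})/(1-e^{-\gamma})$ with $\gamma$ small: then $R'$ is uniformly close to $1$, so $\lambda_1\to 1$, while $a_2\le\delta-f^{-1}(\epsilon)\to 0$ and $b_1,b_2\to 0$; hence $p_2\to 0$ but $\lambda_1 p_1\to\epsilon>0$, so $p_2<\lambda_1 p_1$ for all sufficiently small $\gamma$. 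The $b$-family cannot compensate, since $b_2\le\lambda_2 b_1$ with $\lambda_2\to 1$ as well. Only the upper bound $p_2\le\lambda_2 p_1$ survives at $i=2$, as you noted. The honest resolution is therefore a restatement rather than a computation: assert $\lambda_1 p_{i-1}\le p_i\le\lambda_2 p_{i-1}$ for $i\ge 3$, supplemented by the provable base case $p_2\ge\lambda_1\bigl(p_1-f^{-1}(\epsilon)\bigr)$ (add the MVT bounds for $a_2$ and $b_2$), and then adjust the lower bound of Proposition~\ref{thm-entropy}, whose geometric-series argument currently starts the chain $p_i\ge p_1\lambda_1^{i-1}$ at $i=1$; its upper bound, and everything relying only on $p_i\le\lambda_2 p_{i-1}$, is unaffected.
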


\begin{proof}
Let $\tau = \tau_{i}$, for some $i \geq 2$, where $\left\{\tau_i\right\}$ is defined in the proof of \Lref{lemfxp}. Then we have
$$
a_{i} = \tau - R(\tau), a_{i-1} = R(\tau) - R(R(\tau)),
$$ 
where $\left\{a_i\right\}$ is defined in \eq{ab-def}. Since $R$ is a continuous and differentiable function, then by the mean value theorem, there exists $c \in [\tau,R(\tau)]$ such that
$$
R'(c) = \frac{R(R(\tau)) - R(\tau)}{R(\tau) - \tau} = \frac{a_{i-1}}{a_{i}} .
$$
Then by definition of $\lambda_1,\lambda_2$ in \eq{lamb-def} we have 
$$
\lambda_1 \leq \frac{a_{i}}{a_{i-1}} \leq \lambda_2.
$$
The same argument can be applied to $b_i$'s and then the lemma follows by \Cref{cor1}. 
\end{proof}

Note that \Lref{p-bound} implies that the distribution of $M$ resembles a discretized exponential distribution, i.e., $p_i$ decays exponentially fast as $i$ grows. In particular, it is shown in the next proposition that $M$ has a bounded entropy. It is assumed that $p_1< 1/e \approx 0.37$. If $p_1$, and possibly $p_2$, are greater than $1/e$, then we can exclude them, apply the following proposition to the rest of $p_i$'s, and add the terms corresponding to $p_1$ and $p_2$ in $H(M)$ as constants. 

\begin{proposition}
\label{thm-entropy}
In the two-user pulse coupling system, we have
$$
\frac{g(c)}{1-\lambda_1} + \frac{c g(\lambda_1)}{(1-\lambda_1)^2} \leq H(M) \leq \frac{g(c)}{1-\lambda_2} + \frac{c g(\lambda_2)}{(1-\lambda_2)^2},
$$
where $g(x) = -x\log x$, $\lambda_1,\lambda_2$ are defined in \eq{lamb-def}, and $c = p_1 = 1+\delta-h^{-1}(\delta)$. 
\end{proposition}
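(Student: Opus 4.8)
The plan is to bound $H(M) = \sum_{i=1}^{\infty} g(p_i)$ term by term by comparing each $p_i$ against a purely geometric sequence, and then to evaluate the resulting geometric sums in closed form. First I would iterate the two-sided recursion of \Lref{p-bound}: since $p_1 = c$ and $\lambda_1 p_{i-1} \le p_i \le \lambda_2 p_{i-1}$ for all $i \ge 2$, an immediate induction yields the envelope
\be{env-sketch}
c\,\lambda_1^{\,i-1} \;\le\; p_i \;\le\; c\,\lambda_2^{\,i-1}, \qquad i \ge 1.
\ee

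Next I would transfer this envelope through $g$ using monotonicity. The function $g(x) = -x\log x$ attains its unique maximum at $x = 1/e$ and is increasing on $(0, 1/e]$, irrespective of the logarithm's base. Because $c < 1/e$ and $\lambda_1, \lambda_2 < 1$, every quantity appearing in \eq{env-sketch} lies in $(0, c] \subseteq (0, 1/e)$, so applying $g$ preserves the inequalities, giving $g(c\lambda_1^{i-1}) \le g(p_i) \le g(c\lambda_2^{i-1})$ for each $i$. Summing over $i \ge 1$ — all three series converge, since the arguments decay geometrically while $-\log(\cdot)$ grows only linearly in $i$ — sandwiches $H(M)$ between $\sum_i g(c\lambda_1^{i-1})$ and $\sum_i g(c\lambda_2^{i-1})$.

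It then remains only to evaluate, for $\lambda \in \{\lambda_1,\lambda_2\}$, the sum $\sum_{i=1}^\infty g(c\lambda^{i-1})$. Expanding $g(c\lambda^{i-1}) = -c\lambda^{i-1}\bigl(\log c + (i-1)\log\lambda\bigr)$ and invoking the standard identities $\sum_{i\ge 1}\lambda^{i-1} = (1-\lambda)^{-1}$ and $\sum_{i\ge 1}(i-1)\lambda^{i-1} = \lambda(1-\lambda)^{-2}$, I obtain
$$
\sum_{i=1}^{\infty} g\bigl(c\lambda^{\,i-1}\bigr) = \frac{g(c)}{1-\lambda} + \frac{c\,g(\lambda)}{(1-\lambda)^2},
$$
where I have rewritten $-\lambda\log\lambda$ as $g(\lambda)$. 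Evaluating at $\lambda_1$ and $\lambda_2$ gives exactly the asserted lower and upper bounds.

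The only genuinely delicate point — and the reason the hypothesis $c < 1/e$ is imposed — is the domain restriction required to apply monotonicity of $g$: the term-by-term comparison is valid only while every argument stays on the increasing branch $(0,1/e]$. This is precisely what \eq{env-sketch} together with $c < 1/e$ secures. The case in which $p_1$ (and possibly $p_2$) exceed $1/e$ is disposed of as noted before the statement, by peeling those finitely many terms off $H(M)$ as additive constants and applying the argument above to the geometrically decaying tail, which by \eq{env-sketch} eventually falls below $1/e$.
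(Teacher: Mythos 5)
Your proposal is correct and follows essentially the same route as the paper's proof: iterate \Lref{p-bound} to get the geometric envelope $p_1\lambda_1^{i-1}\le p_i\le p_1\lambda_2^{i-1}$, push it through $g$ using the monotonicity of $g(x)=-x\log x$ on $(0,1/e]$ (which is where the hypothesis $p_1<1/e$ enters), and sum the resulting geometric series in closed form. The paper states this argument very tersely, leaving the series evaluation implicit; your write-up simply supplies those details and they check out.
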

\begin{proof}
By \Lref{p-bound} we have $p_1 \lambda_1^{i-1} \leq p_i \leq p_1\lambda_2^{i-1}$. Then the proof follows by the definition of entropy function $H(.)$ and noting that $g(x)=-x\log x$ is increasing for $x\leq 1/e$.  
\end{proof}

\noindent
{\bf Remark\,1.} We conjecture that in a general set-up consisting of $n$ POs, $H(M) = O(\log n)$. More specifically, we believe it can be shown that after $O(n)$ \textit{full cycles} the POs are split into $O(1)$ clusters, each consisting of synchronized POs. Then one can only analyze the entropy of shared randomness involving a constant number of POs and use the bound on the entropy of sum of two random variables to prove the conjecture.

\noindent
{\bf Remark\,2.} In order to resolve the possible difference of $1$ between counted pulses by the two users, a simple key reconciliation method can be deployed as follows. Users will exchange their observations modular $3$. Then if there is a difference, the user with smaller observation increments it by $1$ which would make it an error-free common randomness. In general, if we want to recover from a larger difference, up to a certain $d$, between observations, e.g., due to an initial phase difference of more than one unit of time, the same procedure can be deployed. In that case, users exchange their observation modular $2d+1$ using which reconciliation can be done.

\section{Secret Key Rate}
\label{sec5}

A three-terminal model, as described in Section\,\ref{subsec22}, is considered. The common randomness $M$ between Alice and Bob is generated according to the process discussed in Section\,\ref{sec4} with a slight modification as follows. In order to avoid long waiting times, Alice and Bob set a fixed threshold $\mt$. They continue to send pulses until each of them sends $\mt$ pulses, regardless of whether synchronization has occurred or not, at which point they stop the current \textit{session}. Then they may start a new \textit{session} with new initial random phases, e.g., by simply reseting their POs, and the same process will be repeated. If synchronization has occurred at some point during the \textit{session}, then the common randomness $M$ is the number of pulses till that point. Otherwise, $M=\mt$. In other words, the probability distribution $(p_1,p_2,\dots)$, characterized in \Cref{cor1}, is truncated as follows. For $i=1,2,\dots,\mt-1$, $Pr\{M=i\} = p_i$, and $Pr\{M=\mt\} = \sum_{i=\mt}^{\infty} p_i$. Also, for $i>\mt$, $Pr\{M=i\} = 0$. Furthermore, we assume that the common randomness $M$ is identical at Alice and Bob, i.e., $X=Y = M$, and $\sX = \sY = \{1,2,\dots,\mt\}$. 

\noindent
{\bf Remark\,3.} In order to recover from possible differences between Alice's and Bob's observations a procedure, as discussed in Remark\,2, can be deployed. Since $M \mod (2d+1)$ is then revealed to Eve, Alice and Bob take $\left\lfloor\ M/(2d+1) \right\rfloor$ as the common randomness. The bounds provided in this section can be also modified to reflect this extra step, however, we keep assuming $M$ as the common randomness to simplify derivations. 

Eve's observation $Z$, according to the model described in Section\,\ref{subsec21}, is as follows. Let $S=\{S_i\}_{i=1}^{\mt}$ denote the synchronization indicator sequence, where $S_i$ is the indicator of synchronization in the $i$-th \textit{full cycle}, as defined in Section\,\ref{sec3}, for $i=1,2,\dots,\mt$. Note that if $M=m$, then we have $S_i = 0$, for $1 \leq i < m$, and $S_i = 1$, for $m \leq i \leq \mt$. Then $Z = (Z_1,Z_2,\dots,Z_{\mt})$, where $Z_i = S_i \oplus Q_i$, and $Q_i$'s are i.i.d. with Ber$(p)$, where $p$ is the model parameter described in Section\,\ref{sec3}. 

In general, when $X=Y=M$ in the three-terminal model, the lower and upper bounds in \eq{key-bound} match. Hence, the  secret key capacity, which can be denoted by $\cS(M|Z)$, is given as follows:
\be{key-bound2}
\cS(M|Z) = H(M) - I(M;Z) = H(M|Z).
\ee

Since the complexity of the exact computation of $H(M|Z)$ is exponential in terms of $\mt$, we provide a lower bound on $\cS(M|Z)$ in terms of the parameters of the coupling system as well as Eve's parameter $p$. The lower bound shows that the secret key rate $\cS(M|Z)$ is strictly positive regardless of the choice for $\mt$. The following lemma is useful to derive such a bound.

\begin{lemma}
\label{lem-mbound}
For any $m_1,m_2 \in \{1,2,\dots,\mt\}$ we have
$$
\frac{Pr\{Z| M=m_1\}}{Pr\{Z|M=m_2\}} \geq (\frac{p}{1-p})^{|m_1 - m_2|},
$$
for any instance of $Z$. 
\end{lemma}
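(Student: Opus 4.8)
The plan is to recognize that, conditioned on $M = m$, Eve's observation $Z$ is exactly the output of passing the deterministic synchronization sequence
$$
S^{(m)} = (\underbrace{0,\dots,0}_{m-1},\underbrace{1,\dots,1}_{\mt-m+1})
$$
through a binary symmetric channel with crossover probability $p$, since $Z_i = S_i \oplus Q_i$ with the $Q_i$ i.i.d.\ $\mathrm{Ber}(p)$. First I would write the likelihood explicitly. Letting $d(\bu,\bv)$ denote the Hamming distance between two binary vectors, each coordinate contributes a factor $p$ when $Z_i \neq S_i^{(m)}$ and a factor $1-p$ otherwise, so that
$$
Pr\{Z \mid M = m\} = p^{\,d(Z,S^{(m)})}\,(1-p)^{\mt - d(Z,S^{(m)})}.
$$
Writing $d_j = d(Z,S^{(m_j)})$ for $j=1,2$ and forming the ratio, the factors of $(1-p)^{\mt}$ cancel and everything collapses to
$$
\frac{Pr\{Z \mid M = m_1\}}{Pr\{Z \mid M = m_2\}} = \left(\frac{p}{1-p}\right)^{d_1 - d_2}.
$$

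The key combinatorial step is then to control the exponent $d_1 - d_2$. Both $S^{(m_1)}$ and $S^{(m_2)}$ are monotone step sequences of the same length $\mt$, consisting of a block of zeros followed by a block of ones; for $m_1 < m_2$ they disagree precisely in coordinates $m_1,\dots,m_2-1$ (where $S^{(m_1)} = 1$ but $S^{(m_2)} = 0$), and symmetrically otherwise. Hence $d(S^{(m_1)},S^{(m_2)}) = |m_1 - m_2|$ exactly. Applying the triangle inequality for the Hamming metric gives
$$
d_1 = d(Z,S^{(m_1)}) \le d(Z,S^{(m_2)}) + d(S^{(m_2)},S^{(m_1)}) = d_2 + |m_1 - m_2|,
$$
so that $d_1 - d_2 \le |m_1 - m_2|$.

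Finally, since $p \in (0,\shalf)$ we have $p/(1-p) < 1$, and therefore $x \mapsto \bigl(p/(1-p)\bigr)^{x}$ is \emph{decreasing}. Combining this with the exponent bound $d_1 - d_2 \le |m_1 - m_2|$ yields the claimed inequality directly. I do not expect any serious obstacle: the argument is essentially a one-line consequence of the BSC structure together with the triangle inequality. The only point requiring care is the direction of the final inequality, which flips precisely because the base $p/(1-p)$ lies below one, so that the upper bound $d_1 - d_2 \le |m_1-m_2|$ on the exponent produces a \emph{lower} bound on the likelihood ratio.
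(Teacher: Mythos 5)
Your proof is correct and follows essentially the same route as the paper's: both exploit the memoryless BSC structure to factor the likelihood and the fact that the two monotone indicator sequences $S^{(m_1)}, S^{(m_2)}$ differ in exactly $|m_1-m_2|$ coordinates. The only difference is one of packaging---you bound the exponent via the Hamming triangle inequality and the monotonicity of $x \mapsto \bigl(p/(1-p)\bigr)^{x}$ for a base below one, whereas the paper (implicitly) bounds the per-coordinate likelihood ratios at the differing positions directly; the underlying argument is the same.
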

\begin{proof}
Let $\{s_{i,j}\}_{i=1}^{\mt}$ denote the synchronization indicator sequences for $m_j$, $j=1,2$. Also, note that
$$
Pr\{Z| M=m_j\} = \Pi_{i=1}^{\mt} Pr\{Z_i | S_i = s_{i,j}\}.
$$
This together with noting that $\{s_{i,1}\}$ and $\{s_{i,2}\}$ differ in exactly $|m_1 - m_2|$ positions, and the assumption on the noise $Q_i = Z_i \oplus S_i$ (i.i.d. with Ber$(p)$) complete the proof. 
\end{proof}

\begin{proposition}
\label{prop-main}
For the considered three-terminal model with common shared randomness $M$ the secret key rate $\cS(M|Z)$ is lower bounded as
\be{rate-bound}
\cS(M|Z) \geq \log \min\{ \frac{1}{1-\delta_1}, 1+(1-\lambda_2)\delta_2\frac{1-\delta_2^{\mt}}{1-\delta_2} \},
\ee
where $\delta_1 = \lambda_1p/(1-p)$, $\delta_2 = \lambda_2^{-1} p/(1-p)$, and $\lambda_1,\lambda_2$ are defined in \eq{lamb-def}.
\end{proposition}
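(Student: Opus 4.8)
The plan is to start from \eq{key-bound2}, which collapses the secret key rate into the single quantity $H(M|Z)$, and then to lower bound this conditional entropy by the conditional min-entropy. For a fixed instance $z$ of $Z$, writing $q_m \deff Pr\{M=m|Z=z\}$, one has $H(M|Z=z)=\sum_m q_m\log(1/q_m)\ge \log\big(1/\max_{m'}q_{m'}\big)$, since $\log(1/q_m)\ge\log(1/\max_{m'}q_{m'})$ for every $m$ in the support. Averaging over $z$, it then suffices to exhibit a bound $\max_m Pr\{M=m|Z=z\}\le P^{\ast}$ that holds uniformly in $z$; this immediately yields $\cS(M|Z)=H(M|Z)\ge\log(1/P^{\ast})$, and the remaining task is entirely to upper bound the maximum-a-posteriori probability.

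To do so I would expand the posterior via Bayes' rule,
\[
\frac{1}{Pr\{M=m|Z=z\}}=\sum_{m'=1}^{\mt}\frac{Pr\{Z=z|M=m'\}}{Pr\{Z=z|M=m\}}\cdot\frac{Pr\{M=m'\}}{Pr\{M=m\}},
\]
and bound the two families of ratios separately. The likelihood ratios are handled by \Lref{lem-mbound}, which gives $Pr\{Z=z|M=m'\}/Pr\{Z=z|M=m\}\ge (p/(1-p))^{|m'-m|}$ for every $z$; crucially this estimate is uniform in $z$, which is exactly what makes the final $P^{\ast}$ independent of Eve's particular observation. The prior ratios are handled by \Lref{p-bound}: for $m'>m$ it gives $Pr\{M=m'\}/Pr\{M=m\}\ge\lambda_1^{m'-m}$, and for $m'<m$ it gives $\ge\lambda_2^{-(m-m')}$. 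Multiplying the two bounds collapses each summand into a pure geometric factor, $\delta_1^{m'-m}$ for the terms with $m'>m$ and $\delta_2^{m-m'}$ for those with $m'<m$, so that $1/Pr\{M=m|Z=z\}$ is bounded below by $1$ together with two finite geometric series in $\delta_1$ and $\delta_2$.

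It then remains to take the worst case over $m$. Since $\delta_1<\delta_2$, the resulting lower bound, viewed as a function of $m$, first increases and then decreases (and is monotone when $\delta_2\ge 1$), so its minimum --- equivalently, the maximal posterior --- is attained at one of the two endpoints $m=1$ or $m=\mt$, with every interior index dominated. Evaluating at $m=1$ keeps only the $\delta_1$-series and produces the first term of the $\min$ in \eq{rate-bound}, while evaluating at $m=\mt$ keeps only the $\delta_2$-series and produces the second. The one place the clean ratio bounds fail is the truncated atom $Pr\{M=\mt\}=\sum_{i\ge\mt}p_i$: I would bound it by $Pr\{M=\mt\}\le p_{\mt}/(1-\lambda_2)$ using the upper inequality of \Lref{p-bound}, and it is precisely this step that inserts the factor $(1-\lambda_2)$ into the second term. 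Setting $P^{\ast}$ equal to the reciprocal of the smaller of the two endpoint values then delivers the $\min$ in \eq{rate-bound}.

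The main obstacle, I expect, is this last step rather than the first two. Bounding the likelihood and prior ratios is essentially mechanical once \Lref{lem-mbound} and \Lref{p-bound} are in hand, and the uniformity in $z$ is inherited for free from \Lref{lem-mbound}. The delicate parts are (i) proving that the posterior is genuinely maximized at a boundary index, which requires the monotonicity analysis of the mixed $\delta_1/\delta_2$ geometric sum together with a separate treatment of the regimes $\delta_2<1$ and $\delta_2\ge1$, and (ii) propagating the truncation correction at $m=\mt$ without misplacing the $(1-\lambda_2)$ factor, since the interior indices feel the atom at $\mt$ only through a single summand whereas the endpoint $m=\mt$ feels it in the denominator of every term.
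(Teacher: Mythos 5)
Your proposal follows the paper's route step for step: collapse $\cS(M|Z)$ to $H(M|Z)$ via \eq{key-bound2}, lower bound it by the min-entropy $-\log\max_{m,z}Pr\{M=m|Z=z\}$, expand the reciprocal posterior by Bayes' rule, bound the likelihood ratios uniformly in $z$ by \Lref{lem-mbound} and the prior ratios by \Lref{p-bound}, and sum the resulting geometric factors; your insertion of the $(1-\lambda_2)$ factor at $m=\mt$ via $Pr\{M=\mt\}\le p_{\mt}/(1-\lambda_2)$ is precisely the paper's move in \eq{prop-main-eq3}. One of the two steps you flag as delicate --- the unimodality-in-$m$ analysis showing the posterior is maximized at an endpoint --- is actually unnecessary: the paper handles all $m_0\in\{1,\dots,\mt-1\}$ at once by simply discarding the nonnegative $\delta_2$-terms, which gives every interior index the same bound as $m=1$, so the $\min$ of two expressions falls out with no case analysis on $\delta_2<1$ versus $\delta_2\ge 1$.

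The genuine gap is in the other step, which you dismiss as benign: your claim that indices $m<\mt$ feel the truncated atom ``only through a single summand'' and that the clean ratio bounds suffice there. With only the clean bounds, the $m'=\mt$ summand contributes $\delta_1^{\mt-m}$, and at $m=1$ you obtain the \emph{finite} series $\sum_{i=0}^{\mt-1}\delta_1^i=\frac{1-\delta_1^{\mt}}{1-\delta_1}$, i.e., the bound $\log\frac{1-\delta_1^{\mt}}{1-\delta_1}$, which is strictly smaller than the first term $\log\frac{1}{1-\delta_1}$ of \eq{rate-bound}; so your plan proves a strictly weaker statement whenever that term attains the min. To recover the stated constant you must unpack the atom as a tail in this case as well: write $Pr\{M=\mt\}=\sum_{i\ge\mt}p_i$ with $p_i\ge\lambda_1^{i-m}p_m$ from \Lref{p-bound}, and combine with $\bigl(p/(1-p)\bigr)^{\mt-m}\ge\bigl(p/(1-p)\bigr)^{i-m}$ for $i\ge\mt$ (valid since $p/(1-p)<1$), so that the $m'=\mt$ summand is at least $\sum_{i\ge\mt-m}\delta_1^i$ and the series becomes the full $\sum_{i=0}^{\infty}\delta_1^i=\frac{1}{1-\delta_1}$; this is exactly how the paper's display \eq{prop-main-eq2} arrives at $1-\delta_1$. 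The fix is one line, but as written your argument does not yield \eq{rate-bound}. (Incidentally, your second term evaluates to $1+(1-\lambda_2)\delta_2\frac{1-\delta_2^{\mt-1}}{1-\delta_2}$, which is what the paper's intermediate sum in \eq{prop-main-eq3} actually equals; the exponent $\mt$ in the paper's final expression is an off-by-one slip on their side, not yours.)
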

\begin{proof}
For any $m_0 \in \{1,2,\dots,\mt\}$, using the Bayes' rule and the law of total probability we have
\be{prop-main-eq1}
Pr\{M=m_0|Z\} = \frac{Pr\{Z|M=m_0\}Pr\{M=m_0\}}{\sum_{m=1}^{\mt} Pr\{Z|M=m\}Pr\{M=m\}}.
\ee
By plugging the bounds from \Lref{lem-mbound} and \Lref{p-bound} in \eq{prop-main-eq1}, for $m_0 \in \{1,2,\dots,\mt-1\}$ we have
\be{prop-main-eq2}
Pr\{M=m_0|Z\} \leq 1/\sum_{i=0}^{\infty} \lambda_1^i  (p/1-p)^i = 1 - \delta_1,
\ee
for any instance of $Z$. And, similarly, for $\mt$ we have
\be{prop-main-eq3}
\begin{split}
Pr\{M=\mt |Z\} &\leq 1/\bigl(1+\sum_{i=1}^{\mt-1} \lambda_2^{-i}  (p/1-p)^i (1-\lambda_2)\bigr)\\
& = 1/\bigl(1+(1-\lambda_2)\delta_2\frac{1-\delta_2^{\mt}}{1-\delta_2}\bigr),
\end{split}
\ee
where we again used bounds from \Lref{lem-mbound} and \Lref{p-bound} in \eq{prop-main-eq1} while noting that $Pr\{M=\mt\} = \sum_{i=\mt}^{\infty} p_i$. 

As stated in \eq{key-bound2}, $\cS(M|Z) = H(M|Z)$. Note that for any two random variables $M,Z$, we have
$$
H(M|Z) \geq -\log \max_{m,z} Pr\{M=m|Z=z\}.
$$
This together with bounds in \eq{prop-main-eq2} and \eq{prop-main-eq3} complete the proof. 
\end{proof}
\noindent
{\bf Remark\,4.} Note that the second term over which minimization of \eq{rate-bound} takes place is increasing with $\mt$. Therefore, the lower bound provided in \Pref{prop-main} is non-decreasing with $\mt$. Hence, the secret key rate, in terms of bits/session, is strictly positive regardless of $\mt$ and is actually bounded away from $0$ as $\mt \rightarrow \infty$. Also, there is a certain threshold such that increasing $\mt$ beyond that threshold does not improve the lower bound of \eq{rate-bound}. It would be interesting to see if the actual secret key rate $\cS(M|Z)$ exhibits the same behavior. 

\noindent
{\bf Remark\,5.} Here, we do not discuss a coding method for Alice and Bob to extract a secure key $K$ from $M$ in an information-theoretic sense, i.e., $I(K;Z)$ being small. In particular, since $H(M)$ is bounded, as shown in Section\,\ref{sec4}, there is no asymptotic behavior for such information-theoretic arguments. One has to consider several \textit{sessions} between Alice and Bob and then apply standard key extraction techniques, e.g., using polar codes \cite{chou2015polar}, to a sequence of shared symbols $M_j$'s. Alternatively, an \textit{ad-hoc} solution is also possible by simply applying a linear transform, e.g. a polarization matrix \cite{arikan2009channel}, to the sequence $\{S_i\}_{i=1}^{\mt}$ in one session. Although $S_i$'s are not independent, an argument similar to \cite[Proposition 3]{mahdavifar2011achieving} can be used to show that compression of $M$ and also extracting a secure $K$ can be done to some extent (Again, no concrete results can be made here as there is no asymptotic behavior). Such solutions are low complex and can be locally and identically performed by Alice and Bob without any need for further communication. The details are left for future work. 

\section{Resilience Against Jamming Attacks}
\label{sec6}

Consider a two-user scenario where legitimate users want to establish synchronization and to extract a common randomness, as discussed in Section\,\ref{sec4}. Suppose that there is a jammer present in the network equipped with a similar PO, as modeled in Section\,\ref{sec3}.

Let $\lambda_0$ be as defined in \eq{lamb-def}. This parameter is frequently used throughout this section. Let $\tau$ denote the phase difference between the two POs at the beginning of a \textit{full cycle}. If there is no jammer, then the phase difference becomes $R(\tau)$ at the end of the \textit{full cycle}, as discussed in Section\,\ref{sec4}. In the presence of a jammer, the dynamic may change as will be described in the following lemma. Note that in the remaining of this section we discard specifying the domains of $h$ and $R$ and assume the following: if $\tau > 1$, then $h(\tau)$ is replaced by $0$; if $\tau > h^{-1}(\delta)$, then $R(\delta)$ is replaced by $1$; if $\tau < \delta$, then $h(\tau)$ is replaced by $1$ and $R(\tau)$ is replaced by $0$, where $\delta$ is defined in Section\,\ref{sec4} as $\delta = 1-f^{-1}(1-\ep)$.

\begin{lemma}
\label{lem-jam}
Let $\tau,\tau'$ denote the phase differences at the beginning and at the end of a \textit{full cycle}. Then
$$
\tau' \in \bigl[h\bigl(\lambda_0 h(\tau)\bigr), \max\{R(\lambda_0 \tau),\lambda_0 R(\tau)\}\bigr],
$$
where $h$, $R$, and $\lambda_0$ are defined in \eq{h-def}, and \eq{lamb-def}, respectively. 
\end{lemma}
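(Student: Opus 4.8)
The plan is to track the phase difference through the full cycle and to isolate the effect of the single jammer pulse. Write $g(\phi)\deff f^{-1}(f(\phi)+\ep)$ for the map that advances a phase upon receiving one pulse, so that $h(\tau)=g(1-\tau)$ and one legitimate firing turns a difference $\tau$ (leader ahead) into $h(\tau)$ with leader and follower swapped; two such firings give $R(\tau)=h(h(\tau))$, recovering the jammer-free dynamics of Section~\ref{sec4}. The one quantitative fact I need about an isolated jammer pulse is that, since it advances \emph{both} oscillators by the same map $g$, it sends a current difference $d$ to $g(\phi_{\text{lead}})-g(\phi_{\text{lag}})$; by the mean value theorem this equals $g'(\xi)\,d$ for an intermediate $\xi$, and because $g'=f'(\phi)/f'(g(\phi))>1$ by concavity of $f$, with $\sup g'=\sup|h'|$ the constant $\lambda_0$ of \eq{lamb-def}, a single pulse stretches any difference by a factor in $[1,\lambda_0]$. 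This is the heart of the lemma.

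Next I would split on where the jammer's (at most one) pulse falls relative to the two legitimate firings $E_1$ (leader fires) and $E_2$ (new leader fires). If the pulse precedes $E_1$, it first stretches $\tau$ to $c\tau$ with $c\in[1,\lambda_0]$ and the ordinary cycle then yields $\tau'=R(c\tau)$, so monotonicity $R'>1$ gives $R(\tau)\le\tau'\le R(\lambda_0\tau)$. If it follows $E_2$, the difference is already $R(\tau)$ and only the stretch remains, so $\tau'=cR(\tau)\in[R(\tau),\lambda_0 R(\tau)]$. The decisive case is a pulse landing between $E_1$ and $E_2$: the difference right after $E_1$ is $h(\tau)$, the pulse stretches it to $c\,h(\tau)$, and the remaining firing $E_2$ applies $h$, giving $\tau'=h(c\,h(\tau))$; since $h$ is \emph{decreasing}, a larger $c$ produces a smaller output, so $h(\lambda_0 h(\tau))\le\tau'\le h(h(\tau))=R(\tau)$.

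Assembling the three cases—together with the jammer-free value $\tau'=R(\tau)$, which sits inside all of them since $\lambda_0\ge1$—yields the claimed interval: the global minimum $h(\lambda_0 h(\tau))$ is attained only in the middle case under maximal stretch, while the global maximum is the larger of the two incomparable upper ends $R(\lambda_0\tau)$ and $\lambda_0 R(\tau)$, whence $\tau'\in[\,h(\lambda_0 h(\tau)),\ \max\{R(\lambda_0\tau),\lambda_0 R(\tau)\}\,]$. I expect the main obstacle to be the boundary situations in which the jammer pulse pushes an oscillator across the firing threshold $1-\ep$, or close enough to trigger synchronization: there the clean ``stretch-then-fire'' decomposition breaks, a firing cascade can occur, and one must invoke the edge conventions stated just before the lemma (replacing $h$ or $R$ by $0$ or $1$) and check case-by-case that $\tau'$ still lands in the interval. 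Controlling the stretch through these degenerate firings, rather than the monotonicity bookkeeping, which is routine, is where the real care is needed.
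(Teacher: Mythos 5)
Your proof is correct and follows essentially the same route as the paper's: a three-way case split on whether the jammer's pulse arrives before the first legitimate firing, between the two firings, or after both, with the mean value theorem bounding the single-pulse stretch factor in $(1,\lambda_0]$ and the monotonicity of $R$ (increasing) and $h$ (decreasing) delivering the respective endpoints $R(\lambda_0\tau)$, $h(\lambda_0 h(\tau))$, and $\lambda_0 R(\tau)$. Your closing remark about threshold-crossing cascades is in fact more careful than the paper, which handles those degenerate situations only implicitly through the conventions stated just before the lemma.
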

\begin{proof}
There are three possible scenarios for the arrival time of jammer's pulse during one \textit{full cycle} in terms of how many pulses, either $0$, $1$, or $2$ pulses, have been sent in that cycle. Consider the first case where jammer's pulse arrives at a time $t$ before any of the two POs send a pulse. Let $\phi, \phi + \tau$ denote the phases of the two POs at time $t$. Then the phases change to  $h(1-\phi)$, $h(1-\phi-\tau)$ right after $t$ (Note that $h$ is a decreasing function with $h'<-1$). Since $h$ is a continuous and differentiable function, then by the mean value theorem, there exists $c \in [\phi, \phi + \tau]$ such that
$$
1 < \frac{h(1-\phi-\tau) - h(1-\phi)}{\tau} = -h'(1-c) \leq \lambda_0.
$$
Hence, the phase difference $\tau$ is scaled by at most a factor of $\lambda_0 >0$. Since $R = h(h(.))$ is an increasing function, the phase $\tau'$ at the end of \textit{full cycle} is at most $h\bigl(h(\lambda_0 \tau)\bigr) = R(\lambda_0 \tau)$. Similarly, if $t$ is after both users have sent a pulse, then $\tau'$ is at most $\lambda_0 h\bigl(h(\tau)\bigr) = \lambda_0 R(\tau)$. And if $t$ is after one of the users has sent a pulse, then $\tau'$ is actually reduced and lower bounded by $h\bigl(\lambda_0 h(\tau)\bigr)$. This completes the proof.
\end{proof}

Let $R_{\lambda}(\tau)\,\deff\, h\bigl(\lambda h(\tau)\bigr)$. Then we have the following lemma.  

\begin{lemma}
\label{last-lemma}
There exists at most one fixed point for each of the functions $R_{\lambda}(\tau)$, $\lambda R(\tau)$, and $R(\lambda \tau)$, for any $\lambda > 1$. Furthermore, if $R(\lambda \delta) < \delta$, then $R_{\lambda}(\tau)$, $\lambda R(\tau)$, and $R(\lambda \tau)$ have exactly one fixed point. 
\end{lemma}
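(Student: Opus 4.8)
The plan is to treat both assertions through a single structural observation: each of the three maps is strictly increasing with derivative exceeding $1$ throughout its domain, so \emph{uniqueness} of a fixed point is automatic, while \emph{existence} reduces to exhibiting a sign change of $F(\tau)\deff g(\tau)-\tau$ and invoking the intermediate value theorem, with the repeller $\tau^*$ of $R$ serving as the pivotal evaluation point.

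First I would dispose of the ``at most one'' claim. Differentiating, $(\lambda R)'=\lambda R'$, $(R(\lambda\,\cdot))'(\tau)=\lambda R'(\lambda\tau)$, and $R_\lambda'(\tau)=\lambda\,h'(\lambda h(\tau))\,h'(\tau)$. Using $R'>1$ and $|h'|>1$ from \eq{lamb-def} together with $\lambda>1$, each derivative is strictly larger than $1$ (in $R_\lambda'$ the two negative factors of $h'$ multiply to a positive number of magnitude exceeding $\lambda$). If a map $g$ with $g'>1$ had two fixed points $\tau_1<\tau_2$, the mean value theorem would force $\tau_2-\tau_1=g(\tau_2)-g(\tau_1)=g'(c)(\tau_2-\tau_1)>\tau_2-\tau_1$, a contradiction; hence there is at most one fixed point in each case.

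For existence under $R(\lambda\delta)<\delta$, I would first note this hypothesis forces $\lambda\delta<\tau^*$: if $\lambda\delta\ge\tau^*$ then monotonicity of $R$ and $R(\tau^*)=\tau^*>\delta$ would give $R(\lambda\delta)\ge\tau^*>\delta$, a contradiction. I then handle the three maps at their natural intervals. \textbf{(a) $R(\lambda\tau)$:} $F(\delta)=R(\lambda\delta)-\delta<0$ by hypothesis, while at $\tau=\tau^*/\lambda$ (which lies in $(\delta,\tau^*)$ since $\lambda\delta<\tau^*$) the inner argument $\lambda\tau$ equals $\tau^*$, so $F(\tau^*/\lambda)=R(\tau^*)-\tau^*/\lambda=\tau^*(1-1/\lambda)>0$. \textbf{(b) $\lambda R(\tau)$:} $F(\tau^*)=(\lambda-1)\tau^*>0$, and for the opposite sign at $\delta$ I would use the super-homogeneity $\lambda R(\delta)\le R(\lambda\delta)$ (discussed below), giving $F(\delta)=\lambda R(\delta)-\delta\le R(\lambda\delta)-\delta<0$. \textbf{(c) $R_\lambda(\tau)$:} since $\lambda h(\tau^*)>h(\tau^*)$ and $h$ is decreasing, $R_\lambda(\tau^*)=h(\lambda h(\tau^*))<h(h(\tau^*))=\tau^*$, so $F(\tau^*)<0$; at the upper endpoint $h^{-1}(\delta)$ we have $R_\lambda(h^{-1}(\delta))=h(\lambda\delta)$, and applying the decreasing map $h^{-1}$ to the hypothesis $h(h(\lambda\delta))<\delta$ gives $h(\lambda\delta)>h^{-1}(\delta)$, i.e.\ $F(h^{-1}(\delta))>0$. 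In each of (a)--(c) the sign change together with the uniqueness above yields exactly one fixed point.

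The main obstacle is the comparison $\lambda R(\delta)\le R(\lambda\delta)$ used in (b). I would derive it from the fact that $t\mapsto R(t)/t$ is increasing on $(\delta,\tau^*)$: the sign of $\frac{d}{dt}\!\big(R(t)/t\big)$ is that of $R'(t)\,t-R(t)$, and since $R'>1$ gives $R'(t)\,t>t$ while the repeller property gives $R(t)<t$ for $t<\tau^*$, this quantity is positive. Applying the monotonicity on $[\delta,\lambda\delta]\subset[\delta,\tau^*)$ yields $R(\lambda\delta)/(\lambda\delta)\ge R(\delta)/\delta$, which is precisely $R(\lambda\delta)\ge\lambda R(\delta)$. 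The only remaining care is bookkeeping with the boundary conventions of \Lref{lem-jam} (capping $h$ and $R$ at the extremes), which affect only out-of-range evaluations and leave each sign computation above intact.
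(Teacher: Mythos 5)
Your proof is correct, and its skeleton is the same as the paper's: uniqueness follows because each of the three maps has derivative strictly greater than $1$ (your mean value theorem contradiction is exactly the standard argument), and existence follows from a sign change plus the intermediate value theorem. The tactical differences are worth recording. First, you anchor the sign changes at the repeller fixed point $\tau^*$ of $R$ (computing $F(\tau^*/\lambda)>0$ for $R(\lambda\tau)$, $F(\tau^*)=(\lambda-1)\tau^*>0$ for $\lambda R(\tau)$, and $R_\lambda(\tau^*)<\tau^*$ for $R_\lambda$), whereas the paper evaluates at the domain endpoints $\delta$ and $h^{-1}(\delta)$ and leans on the out-of-range conventions stated before \Lref{lem-jam}; your choice of interior evaluation points is cleaner in that respect. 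Second, and this is the real divergence, for the map $\lambda R(\tau)$ you obtain the needed inequality $\lambda R(\delta)<\delta$ through an auxiliary comparison $\lambda R(\delta)\le R(\lambda\delta)$, proved by showing $t\mapsto R(t)/t$ is increasing on $(\delta,\tau^*)$ from $R'>1$ and the repeller property $R(t)<t$ for $t<\tau^*$. The paper gets the same conclusion in one line without any new lemma: since $F(\tau)=R(\lambda\tau)-\tau$ is increasing, $F(\delta/\lambda)<F(\delta)<0$, and $F(\delta/\lambda)=R(\delta)-\delta/\lambda<0$ reads exactly as $\lambda R(\delta)<\delta$. Your ratio-monotonicity argument is heavier but self-contained and valid; the paper's substitution trick is slicker. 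Both proofs share the same (in your case explicitly acknowledged) informality about evaluating $R$ at the boundary point $\delta$, which strictly lies outside the open domain $(\delta,h^{-1}(\delta))$.
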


\begin{proof}
Since $R'>1$ and $h'<-1$ over their domains, the derivatives of $R_{\lambda}(\tau)$, $\lambda R(\tau)$, and $R(\lambda \tau)$ are also greater than $1$ over their domains. Hence, the first part follows. Now, let $F(\tau) = R(\lambda \tau) - \tau$. Note that the domain of $R$ is $(\delta,h^{-1}(\delta))$ and it can be observed that $F(h^{-1}(\delta)) > 0$. Now, if $R(\lambda \delta) < \delta$ (equivalent to $F(\delta) < 0$) and since $F' > 0$, then there exists exactly one root for $F$ which becomes a fixed point for $\lambda R(\tau)$. Since $F$ is an increasing function, then $F(\delta/\lambda) < F(\delta) < 0$ or equivalently $\lambda R(\delta) < \delta$. Therefore, $\lambda R(\delta)$ has a unique fixed point using the same argument. Also, note that $R_{\lambda}(\delta) < \delta$ and it can be observed that $R_{\lambda}(h^{-1}(\delta)) > h^{-1}(\delta)$ is equivalent to $R(\lambda \delta) < \delta$ since $h$ is a decreasing function. Hence, $R_{\lambda}(\delta)$ has a unique fixed point using the same argument. 
\end{proof}

\begin{corollary}
\label{cor2}
Suppose that $R(\lambda \delta) < \delta$ and let $\tau_{\lambda}^*$ denote the fixed point for $R({\lambda}\tau)$. Then $\lambda \tau_{\lambda}^*$ is the fixed point for $\lambda R(\tau)$ and $h^{-1}(\tau_{\lambda}^*)$ is the fixed point for $R_{\lambda}(\tau)$. 
\end{corollary}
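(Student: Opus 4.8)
The plan is to avoid any fresh fixed-point analysis and instead obtain both fixed points by \emph{transporting} the already-established fixed point $\tau_\lambda^*$ of $R(\lambda\tau)$ through the functional identities relating the three maps, and then to invoke the uniqueness guaranteed by \Lref{last-lemma}. The only work needed is to verify one fixed-point equation per map by direct substitution; the possibility of a \emph{second} fixed point never has to be ruled out by hand, since \Lref{last-lemma} already provides ``at most one'' and, under the standing hypothesis $R(\lambda\delta)<\delta$, exactly one.

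First I would record the defining relation $R(\lambda\tau_\lambda^*) = \tau_\lambda^*$. For the map $\lambda R(\tau)$ the claim is then immediate: evaluating at $\sigma := \lambda\tau_\lambda^*$ gives $\lambda R(\sigma) = \lambda R(\lambda\tau_\lambda^*) = \lambda\tau_\lambda^* = \sigma$, so $\sigma$ is a fixed point, and by \Lref{last-lemma} it must be \emph{the} fixed point of $\lambda R(\tau)$.

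For $R_\lambda(\tau) = h\bigl(\lambda h(\tau)\bigr)$ I would use that $R = h\circ h$ by its definition in \eq{h-def}. Since $h$ is strictly monotone (indeed $h'<-1$), it is a bijection onto its range, so applying $h^{-1}$ to both sides of $h\bigl(h(\lambda\tau_\lambda^*)\bigr) = \tau_\lambda^*$ yields $h(\lambda\tau_\lambda^*) = h^{-1}(\tau_\lambda^*)$. Substituting $\rho := h^{-1}(\tau_\lambda^*)$ then gives $R_\lambda(\rho) = h\bigl(\lambda h(h^{-1}(\tau_\lambda^*))\bigr) = h(\lambda\tau_\lambda^*) = h^{-1}(\tau_\lambda^*) = \rho$, so $\rho$ is a fixed point of $R_\lambda$, and uniqueness from \Lref{last-lemma} again finishes the claim.

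Each computation is essentially one line, so the only genuine care is domain bookkeeping: I would check that $\lambda\tau_\lambda^*$ lies in the domain $(\delta, h^{-1}(\delta))$ of $R$ (this is forced, since $R(\lambda\tau_\lambda^*)$ is assumed to be evaluated in defining $\tau_\lambda^*$) and that $h^{-1}(\tau_\lambda^*)$ lies in the domain of $R_\lambda$, so that every substituted expression is actually defined. I expect this domain bookkeeping, together with confirming that the hypothesis $R(\lambda\delta)<\delta$ is precisely what makes all three fixed points exist (already supplied by \Lref{last-lemma}), to be the only subtle point; the algebra itself is automatic once $R = h\circ h$ and the invertibility of $h$ are in hand.
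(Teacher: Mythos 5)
Your proposal is correct and matches the paper's intent: the paper states this corollary without an explicit proof, presenting it as an immediate consequence of \Lref{last-lemma}, and your argument---verifying the fixed-point equations by direct substitution (using $R = h\circ h$ and invertibility of $h$) and then invoking the uniqueness supplied by \Lref{last-lemma}---is exactly the reasoning the paper relies on.
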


The following proposition is the main result of this section which shows that synchronization always occurs, under certain conditions, in the presence of jamming attacks. 

\begin{proposition}
\label{main-prop}
If $R(\lambda_0 \delta) < \delta$, where $\lambda_0$ is defined in \eq{lamb-def}, then there exists a $\tau^* \in (\delta,h^{-1}(\delta))$ such that for any initial phase difference $\tau$ with $\tau \notin (\tau^*, h^{-1}(\tau^*))$, synchronization always occurs in the presence of the jamming attack. 
\end{proposition}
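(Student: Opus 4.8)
The plan is to reduce the jamming dynamics to a one‑dimensional monotone iteration bracketed by the two extreme maps supplied by \Lref{lem-jam}, and then to show that outside the stated interval the phase difference is driven monotonically out of $(\delta,h^{-1}(\delta))$ no matter what the jammer does. First I would set $\lambda=\lambda_0$ and take $\tau^{*}=\tau^{*}_{\lambda_0}$, the fixed point of $R(\lambda_0\tau)$; this exists and is unique because the standing hypothesis $R(\lambda_0\delta)<\delta$ is precisely the condition of \Lref{last-lemma}. By \Cref{cor2}, $\lambda_0\tau^{*}_{\lambda_0}$ is the fixed point of $\lambda_0 R(\tau)$ and $h^{-1}(\tau^{*}_{\lambda_0})$ is the fixed point of $R_{\lambda_0}(\tau)=h(\lambda_0 h(\tau))$. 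Since $\lambda_0>1$ and $R$ is increasing, $R(\lambda_0\cdot)$ lies above the unjammed return map $R$ while $R_{\lambda_0}$ lies below it, so their fixed points fall on opposite sides of the fixed point of $R$; in particular $\tau^{*}_{\lambda_0}<h^{-1}(\tau^{*}_{\lambda_0})$, so the interval $(\tau^{*}_{\lambda_0},h^{-1}(\tau^{*}_{\lambda_0}))$ is nonempty and is the $(\tau^{*},h^{-1}(\tau^{*}))$ of the statement. By \Lref{lem-jam} every admissible one‑cycle update obeys $\tau_{k+1}\in[R_{\lambda_0}(\tau_k),\max\{R(\lambda_0\tau_k),\lambda_0 R(\tau_k)\}]$, so it suffices to control the top and bottom of this envelope in the two regimes $\tau\le\tau^{*}_{\lambda_0}$ and $\tau\ge h^{-1}(\tau^{*}_{\lambda_0})$.

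For the regime $\tau\in(\delta,\tau^{*}_{\lambda_0}]$ I would show the upper envelope $\max\{R(\lambda_0\tau),\lambda_0 R(\tau)\}$ is strictly below $\tau$. Both $R(\lambda_0\cdot)$ and $\lambda_0 R(\cdot)$ have derivative exceeding $1$ on their domains (recorded in the proof of \Lref{last-lemma}), so each is expanding and lies below the diagonal to the left of its fixed point. The fixed point of $R(\lambda_0\cdot)$ is $\tau^{*}_{\lambda_0}$ and that of $\lambda_0 R(\cdot)$ is $\lambda_0\tau^{*}_{\lambda_0}>\tau^{*}_{\lambda_0}$, so for $\tau<\tau^{*}_{\lambda_0}$ both $R(\lambda_0\tau)<\tau$ and $\lambda_0 R(\tau)<\tau$; hence $\tau_{k+1}<\tau_k$ for every jammer choice. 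The sequence therefore decreases strictly, and a monotone‑limit argument — any interior accumulation point would have to be a fixed point of the upper envelope, of which there is none in $(\delta,\tau^{*}_{\lambda_0})$ — forces $\tau_k$ toward $\delta$, while the hypothesis $R(\lambda_0\delta)<\delta$ makes the upper envelope strictly below $\delta$ near the endpoint, so some iterate lands in $[0,\delta]$ and synchronization occurs.

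For the regime $\tau\in[h^{-1}(\tau^{*}_{\lambda_0}),h^{-1}(\delta))$ I would argue symmetrically with the lower envelope $R_{\lambda_0}$. Differentiating $R_{\lambda_0}(\tau)=h(\lambda_0 h(\tau))$ gives the product of two factors each below $-1$ (using $h'<-1$ and $\lambda_0>1$), so $R_{\lambda_0}'>1$; thus $R_{\lambda_0}$ is expanding with unique fixed point $h^{-1}(\tau^{*}_{\lambda_0})$ and satisfies $R_{\lambda_0}(\tau)>\tau$ for $\tau>h^{-1}(\tau^{*}_{\lambda_0})$. Then $\tau_{k+1}\ge R_{\lambda_0}(\tau_k)>\tau_k$ for every jammer choice, so $\tau_k$ increases strictly, and the inequality $R_{\lambda_0}(h^{-1}(\delta))>h^{-1}(\delta)$ — shown in the proof of \Lref{last-lemma} to be equivalent to $R(\lambda_0\delta)<\delta$ — forces $\tau_k$ to exceed $h^{-1}(\delta)$ in finitely many cycles, again triggering synchronization.

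I expect the main obstacle to be the two finite‑time claims rather than the monotonicity itself: one must rule out the degenerate possibility that the strictly monotone sequence merely converges to $\delta$ (or to $h^{-1}(\delta)$) from inside without ever leaving $(\delta,h^{-1}(\delta))$. This is exactly where the hypothesis $R(\lambda_0\delta)<\delta$ does the work, since it makes both extreme maps overshoot the boundary of the synchronization region, so that once $\tau_k$ is close enough to an endpoint the next update lands outside $(\delta,h^{-1}(\delta))$; together with the uniform expansion from $R'>1$ and $h'<-1$, this yields exit in finitely many cycles. Some care is also needed to confirm that the relevant fixed points genuinely lie in the domains of the respective maps so that \Cref{cor2} applies, but this again reduces to the same inequality.
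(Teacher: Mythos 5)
Your proposal is correct and takes essentially the same route as the paper's proof: choose $\tau^*$ to be the fixed point of $R(\lambda_0\tau)$ guaranteed by \Lref{last-lemma}, identify the fixed points of $\lambda_0 R(\cdot)$ and $R_{\lambda_0}(\cdot)$ via \Cref{cor2}, and use the expansion property (derivatives $>1$) so that for $\tau\le\tau^*$ the upper envelope of \Lref{lem-jam} satisfies $\max\{R(\lambda_0\tau),\lambda_0R(\tau)\}<\tau$ while for $\tau\ge h^{-1}(\tau^*)$ the lower envelope satisfies $R_{\lambda_0}(\tau)>\tau$, forcing the phase difference out of $(\delta,h^{-1}(\delta))$. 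Your explicit monotone-limit and boundary-overshoot argument for exit in finitely many cycles fills in a step the paper leaves implicit, but the proof is the same in substance.
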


\begin{proof}
Let $\tau^*$ denote the fixed point of $R(\lambda_0 \tau)$, which exists by \Lref{last-lemma}. Then for any $\lambda$, where $1 < \lambda < \lambda_0$, the fixed points of $R_{\lambda}(\tau)$, $\lambda R(\tau)$, and $R(\lambda \tau)$ belong to $(\tau^*, h^{-1}(\tau^*))$. Furthermore, since the derivatives of $R_{\lambda}(\tau)$, $\lambda R(\tau)$, and $R(\lambda \tau)$ are also greater than $1$, they are repeller functions. In particular, for $\tau > h^{-1}(\tau^*)$, we have $R_{\lambda}(\tau) > \tau$, and for $\tau < \tau^*$, we have $\max\{R(\lambda_0 \tau),\lambda_0 R(\tau)\} < \tau$. The proposition follows by this together with \Lref{lem-jam}. 
\end{proof}

The result of Proposition\,\ref{main-prop} can be also interpreted as follows. Assuming that $R(\lambda_0 \delta) < \delta$ holds, then synchronization occurs with probability at least $1-h^{-1}(\tau^*) + \tau^*$, where $\tau^*$ is the fixed point of $R(\lambda_0 \tau)$. Then the results of Section\,\ref{sec4} and Section\,\ref{sec5} can be extended to cases where a jammer is also present. In fact, we expect that these results will be scaled by a constant factor as the probability of synchronization and perhaps a modification of the parameters of the distribution of shared randomness is also needed.

\section{Conclusion}
\label{sec7}

In this paper, motivated by practical limitations of secret key generation protocols, we proposed to exploit readily available synchronization mechanisms in wireless networks, in particular pulse-coupled synchronization, for sharing common randomness between legitimate parties. The initial random phases of the POs deployed by the parties serve as the source of common randomness. Bounds on the entropy of such randomness, which is almost identically observed by the users, are derived for a two-user system. Furthermore, a three-terminal scenario is considered including two legitimate parties and a passive Eve. Eve's receiver is modeled and then a bound on the secret key rate is derived. Also, it is shown that, under certain conditions, the proposed protocol is resilient to active jamming with similar pulse generation mechanism.

There are several directions for future work. It is interesting to generalize the result of Section\,\ref{sec4} to networks with more than two users and, in particular, to check the validity the conjecture discussed in Remark\,1. In a more abstract setup and assuming a central user, this relates to the problem of distributed secret sharing in multi-user scenarios \cite{soleymani2018distributed}. The eavesdropper's model, described in Section\,\ref{sec3} and investigated in Section\,\ref{sec5}, can be extended to take into account memory, imperfectness of pulses, synchronization error, etc. The model for the jamming attack, discussed in Section\,\ref{sec6}, can be also extended to consider more general attacks such as a fixed-power interference, sending higher frequency pulses, etc. Furthermore, implementing the proposed system in front-end antennas, e.g., using setups similar to \cite{IMS}, is another interesting future direction. 

\bibliographystyle{IEEEtran}
\bibliography{PCO}

\end{document}